\documentclass{article}
\usepackage{amsmath,amsfonts, amsthm,amssymb,url,graphicx, lineno}
\usepackage[caption=false,font=normalsize,labelfont=sf,textfont=sf]{subfig}
\usepackage{dsfont}
\theoremstyle{plain}
\newtheorem{thm}{\protect\theoremname}
\theoremstyle{plain}

\theoremstyle{plain}

\theoremstyle{plain}

\theoremstyle{plain}
\newtheorem{lem}[thm]{\protect\lemmaname}
\theoremstyle{plain}

\theoremstyle{plain}
\providecommand{\algorithmname}{Algorithm}
\providecommand{\assumptionname}{Assumption}
\providecommand{\lemmaname}{Lemma}
\providecommand{\theoremname}{Theorem}
\providecommand{\corname}{Corollary}
\providecommand{\propositionname}{Proposition}
\providecommand{\remarkname}{Remark}

\newcommand{\cC}{\mathcal{C}}

\newcommand{\cF}{\mathcal{F}}

\newcommand{\cS}{\mathcal{S}}
\newcommand{\cT}{\mathcal{T}}

\newcommand{\cZ}{\mathcal{Z}}
\newcommand{\EE}{\mathbb{E}}
\newcommand{\NN}{\mathbb{N}}
\newcommand{\PP}{\mathbb{P}}

\newcommand{\indic}{{\bf 1}}

\title{Exponential Backoff: Meta-Stability and Implicit Admission Control}
\author{Richard Combes \thanks{Universit\'e Paris-Saclay, CNRS, CentraleSup\'elec, Laboratoire des signaux et syst\`emes, France} \,and Fabien Mathieu \thanks{Swapcard, France} \,and Thomas Bonald \thanks{Institut Polytechnique de Paris, France}}
\begin{document}
\maketitle
\begin{abstract}
We analyze exponential backoff, an algorithm used to share a single communication channel in a distributed manner between several users, in a similar way as many networking standards such as 802.11. We demonstrate that this algorithm has a meta-stable behavior, in the sense that the system oscillates over long time-scales between meta-stable configurations where only a subset of the sources effectively access the channel. In other words, meta-stability creates a form of implicit admission control. We provide both theoretical tools and numerical experiments to understand this phenomenon further.
Our experiments are fully reproducible and the code is publicly available as a Python package (\footnote{available at \url{https://pypi.org/project/slotted-aloha-simulator/}}). 
\end{abstract}

\section{Introduction}
Contention algorithms with exponential backoff are ubiquitous in communication networks, due to their simplicity and distributed nature, and are used for instance in 802.11 protocols. Most existing analyses of exponential backoff rely on a type of mean field approximation known as Bianchi's model \cite{bianchi1996performance}. This approximation, which makes the analysis tractable, is to assume that the sources are decoupled and experience the same collision rate. In this paper we demonstrate that, in general, the predictions made by these analyses are inaccurate. More precisely, we show that exponential backoff has a \emph{meta-stable behavior}, where the system oscillates over long time-scales between meta-stable configurations where, whenever the number of sources is large, only a subset of the sources actually use the channel, the others being idle due to many successive collisions. In other words, the protocol gives rise to a form of \emph{implicit admission control} where only a subset of the sources actually access the channel. This phenomenon of meta-stability was already observed in \cite{vvedenskaya2007multi} but under the fundamentally incorrect assumption that the parameters of the protocol depend on the number of sources. Indeed, to the best of our knowledge, all practical implementations of exponential backoff impose \emph{fixed parameters} that do not depend on the number of sources. Indeed, the very goal of protocols such as exponential backoff is to adapt to the number of sources without prior information. Due to practical design constraints, sources must adapt solely on previous transmission successes and failures, without any additional information to guide their behavior.

 Our main contribution is to prove that meta-stability occurs in the true setting where the parameters of the protocol are fixed, independently of the number of sources. Furthermore, and perhaps counter-intuitively, we claim that meta-stability is in fact \emph{desirable} and creates a form of \emph{implicit admission control}, allowing an efficient utilization of the channel. Indeed, most work on design of network protocols focuses on stable algorithms, and regards meta-stable and/or unstable algorithms as undesirable. We provide theoretical tools to understand when meta-stability may occur and we complement those findings with extensive, long time-scale numerical experiments confirming that indeed, meta-stable behavior appears and that exponential backoff genuinely performs implicit admission control without any prior knowledge of the number of sources.

The rest of the article is organized as follows:
Section~\ref{sec:rw} reviews the related work. Section~\ref{section:model} introduces the proposed model, for which Section~\ref{section:meanfield} provides a mean-field approximation. Theoretical results for two or more competing sources are presented in Section~\ref{sec:n2} and~\ref{sec:n} respectively. Section~\ref{sec:simu} showcases our numerical results and Section~\ref{sec:conclu} concludes.

\section{Related work}\label{sec:rw}

Due to the ubiquity of exponential backoff algorithms, their analysis has a rich history starting with the seminal works of Bianchi \cite{bianchi1996performance,bianchi,bianchi2}. Further refinements were proposed by~\cite{kwak} and~\cite{kumar2005new}. All of those works are based on an approximation we refer to as \emph{mean field approximation} where all sources are assumed to be decoupled and seen as i.i.d. copies of a single source experiencing the same collision probability.
We give a brief presentation of this approximation in section~\ref{section:meanfield}, and while it makes the system tractable, it hides the meta-stable behavior which we highlight in the present work.

While we focus on the saturated setting where sources always have packets to transmit, Tobagi and Kleinrock considered non-saturated sources, albeit without backoff \cite{tobagi}. Stability of the exponential backoff algorithm in a dynamical system with arrivals and departures of sources was considered by Aldous~\cite{aldous1987ultimate}, in contrast with our setting where the number of sources is fixed. Several works proposed Markov chain models for the case of non-saturated sources, often using a similar decoupling approximation as that of Bianchi \cite{garetto2005performance, malone2007modeling,dao2008new}. Those results were used to optimize 802.11 protocols \cite{6644966} and flow control \cite{baiocchi2023flow}. Our work sheds a new light on some of the conclusions from these works since the decoupling approximation might not reflect the actual system performance. 
Simpler approaches for non-saturated sources were explored in \cite{daneshgaran2007linear,zhao2008simple,ghaboosi2008modeling,tickoo2008modeling,zhao2009simple,vijayasankar2010analytical}, by neglecting the interactions between sources created by exponential backoff.
An analysis of the  Markov chain of saturated sources was proposed in \cite{sharma2009performance}, in order to justify the decoupling assumption.  Our work shows that in some cases, there are qualitative differences due to meta-stability.

Meta-stability was first explored in \cite{vvedenskaya2007multi}, and generally considered as an undesirable phenomenon~\cite{ashrafi2016random}. However, these works consider scaling regimes where the probability of transmission is a decreasing function of the number of sources. We consider a more realistic model where the number of sources is unknown, since the role of the algorithm is to adapt the transmission policy of each source to the number of sources.  In fact, we claim that meta-stability can be desirable property, as it introduces a form of implicit admission control when the number of sources is too large. 
Many more recent works explore the idea of mean-field approximation for saturated sources \cite{proutiere2005random,bordenave2012asymptotic,ashrafi2016random} and non-saturated sources \cite{cecchi2016csma,michalopoulou2016mean}. In all these works, the transmission probability is inversely proportional to the number of sources. Again, we consider the more realistic scenario where the transmission probability is independent of the number of sources. This gives rise to meta-stability, a key phenomenon that is absent from these analyses due to the particular scaling regime considered therein. 

\section{Model}

\label{section:model}

\subsection{Exponential Backoff}
We consider $N$ identical sources sharing a common channel in a distributed manner. Time is slotted, each packet takes one time slot to transmit, in the full-buffer setting where sources always have packets to transmit. If a single source attempts to transmit in a given time slot, this transmission is successful, whereas if several sources attempt to transmit in the same time slot, a collision occurs and none of the packets are transmitted. We consider a protocol  called \emph{exponential backoff}: at each time slot, each source attempts to transmit with probability $p_0 \alpha^{x}$, where $x$ is the number of previous collisions experienced since the last successful transmission and $p_0,\alpha\in  (0,1)$ are two protocol parameters: $p_0$ is the initial transmission probability (for the first packet, or after any successful transmission) and the transmission probability decreases by a factor $\alpha$, called the backoff rate, upon each collision. All sources use the same protocol. This situation is a stylized model for exponential backoff protocols such as those used in CSMA (see discussion below). 

\subsection{Markov Chain}
At time $t \in [T] = \{1,...,T\}$, with $T$ is the time horizon, the state of source $i\in [N]$ is denoted by $X_i(t) \in \NN$. This is the number of collisions experienced from her last successful transmission up to time $t-1$. Sources draw $Z_1(t),...,Z_N(t)$ independent Bernoulli variables with respective means $p_0 \alpha^{X_1(t)},...,p_0 \alpha^{X_N(t)}$ and source $i$ transmits if and only if $Z_i(t) = 1$. Source $i$ will successfully transmit if and only if $Z_i(t) \prod_{j \ne i}(1- Z_j(t)) = 1$. The state of the system $X(t) = (X_1(t),...,X_N(t))$ is clearly a discrete time Markov chain over $\NN^N$. For any $x, x'$ in $\NN^N$, we denote by
\begin{align}
	P(x,x') = \PP( X(t+1) = x'| X(t) = x)
\end{align}
the transition probability of the Markov chain from state $x$ to state $x'$. Let $e_i$ be the vector with 1 in the $i$th coordinate and 0 otherwise. We have:
\begin{align}
P(x,x') = p_0 \alpha^{x_i} \prod_{j \ne i} (1-p_0\alpha^{x_j}) 
\end{align}
if $x_i>0$ and $x' = x - x_ie_i$ (source $i$ with positive state successfully transmits and resets its state),
\begin{align}
P(x,x') = \Big(\prod_{i \in S} p_0 \alpha^{x_i} \Big) \Big( \prod_{j \not\in S} (1-p_0 \alpha^{x_j}) \Big)
\end{align}
if $x' = x + \sum_{i\in S}e_i$ for some subset  $S \subset [N]$ such that $|S|>1$ (sources in $S$ collide and increment their state), and 
\begin{align}
P(x,x') = \prod_{j \in [N]} (1-p_0\alpha^{x_j})+p_0\sum_{i: x_i=0}\prod_{j\neq i}(1-p_0\alpha^{x_j})
\end{align}
if $x'=x$ (no source transmits or a source with null state transmits). In all cases not listed above $P(x, x') = 0$. 

In Figure~\ref{fig:n_2_state_diagram} we represent the state space of $X(t)$ for $N=2$ sources, where the colors of the arrows represent the type of transition: both sources collide (red); both sources do not transmit (black); source $1$ transmits successfully (green); source $2$ transmits successfully (blue).
\begin{figure}[!ht]
\centering
\includegraphics[width=0.7\textwidth]{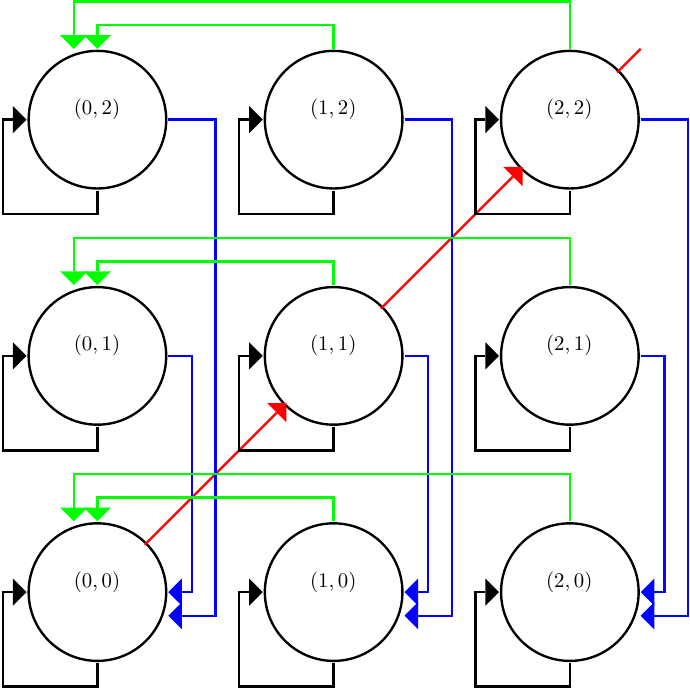}
\caption{State space for $N=2$ sources}
\label{fig:n_2_state_diagram}
\end{figure}



\subsection{Long-Term Behavior}
The Markov chain  $X(t)$ is irreducible. To predict its long term behavior, we must know if it is \emph{transient} (i.e. some states are visited a finite number of times), \emph{positive recurrent} (i.e. all states are visited infinitely many times and the expected time between two visits to a given state is finite) or \emph{null recurrent} (i.e. all states are visited infinitely many times and the expected time between two visits to a given state is infinite). An equivalent term for positive recurrence used in the queuing literature is \emph{stability}. If the process is positive recurrent, then its long time behavior will be given by its stationary distribution, which is the unique solution to the balance equations:
\begin{equation}
    \pi(x) = \sum_{x' \in \NN^N} \pi(x') P(x',x) \text{ for all } x \in \NN^N.
    \label{eq:balance}
\end{equation}
Now, to the best of our knowledge, it is not possible to solve the balance equations in closed form, even in the simplest case  $N=2$. The solution can be approximated numerically by truncating the state space and solving a linear system to perform numerical experiments. Furthermore, more fundamentally, there are some regimes in which no solution exists because the Markov chain is not positive recurrent, as our work will demonstrate. In those regimes, other strategies are needed in order to predict performance. 
\subsection{Exponential Backoff vs Uniform Backoff}
In our work, after experimenting $x$ collisions, a source will wait for a geometrically distributed time with parameter ${p_0 \alpha^x}$ whereas in classical protocols like CSMA this waiting time is uniformly distributed over some contention window. This stylized model simplifies the state space of the Markov chain (compare for example Figure~\ref{fig:n_2_state_diagram} to~\cite[Fig. 4]{bianchi2}) and eases the theoretical analysis.  In Section~\ref{sec:validation}, we show through numerical experiments that  both systems behave similarly, so that our theoretical predictions are relevant and are not an artifact of our model.

\section{Mean-Field Approximation}\label{section:meanfield}

We first consider the standard approach based on mean-field approximation in our setting with exponential backoff. It will be shown in the following sections that the conclusions drawn from this approach can be erroneous due to the phenomenon of meta-stability.

\subsection{Mean-Field Equations}
When the Markov chain is positive recurrent, the stationary distribution exists and is the unique solution to the balance equations \eqref{eq:balance}. Since these equations cannot be solved in general, a popular approach introduced in the seminal work of Bianchi \cite{bianchi} is to use a decoupling assumption, a form of mean-field approximation used in physics: we then consider a fictitious system where the states of the sources are i.i.d. This is an approximation because in the real system, the states of the sources are correlated due to the collisions. It is expected that, for  large values of $N$, the states of the sources become independent. This is called the {mean-field} approximation in analogy with physics where, in large systems of particles, it is frequently assumed that particles act as i.i.d. copies of a single particle, each experiencing the same {\it mean field}.

Formally, the mean-field approximation consists in  considering an alternate process $\tilde{X}(t) = (\tilde{X}_1(t),...,\tilde{X}_N(t))$ where $\tilde{X}_1(t),...,\tilde{X}_N(t)$ are i.i.d. and whenever a source attempts to transmit, this transmission results in a collision  with some fixed probability $q$ to be determined. From independence, the transition probabilities of this process are given by:
\begin{align}
	\tilde{P}(x,x') 
	&= \PP( \tilde{X}(t+1) = x'| \tilde{X}(t) = x) \\
	&= \prod_{i \in [N]} \tilde{p}(x_i,x_i')
\end{align} 
with
\begin{align}
\tilde{p}(x_i,x_i') = \PP( \tilde{X}_i(t+1) = x_i'| \tilde{X}_i(t) = x_i)
\end{align}
given by
\begin{align}
\label{eq:mf_transition}    
\tilde{p}(x_i,x_i') = \begin{cases} 
(1-q) p_0 \alpha^{x_i} & \text{ if } x_i>0, x'_i=0,  \hfill (i) \\
q p_0 \alpha^{x_i} & \text{ if } x'_i=x_i+1, \hfill (ii) \\ 
1-p_0 \alpha^{x_i} & \text{ if } x_i>0, x'_i=x_i, \hfill (iii) \\ 
1-q p_0 & \text{ if } x_i=x'_i=0, \hfill (iv) \\ 
0  & \text{ otherwise,} \hfill  (v) \end{cases} 
\end{align}
where the cases enumerated above correspond to: $(i)$ source $i$ successfully transmitting; $(ii)$  source $i$ attempting to transmit and colliding; $(iii)$ source $i$ not transmitting; $(iv)$ the edge case $\tilde{p}(0, 0)$ that combines cases $(i)$ and $(iii)$. The mean field approximation therefore consists in approximating $X(t)$ by $\tilde{X}(t)$, and while in some cases this approximation is reasonable, we will demonstrate that there are also cases in which the two processes behave in a completely distinct manner.
\subsection{Stationary Distribution and Fixed-Point Equation}
\label{sec:mean_field_geo}
The main advantage of the mean field approximation is that the balance equations
\begin{equation}
\tilde{\pi}(x) = \sum_{x' \in \NN^N} \tilde{\pi}(x') \tilde{P}(x, x') \text{ for all } x \in \NN^N
\end{equation}
can be solved in closed form, and the solution is
\begin{equation}
\label{eq:mf_geometric}
\tilde{\pi}(x) = \prod_{i \in [N]} \tilde{\pi}_i(x_i), \text{ with } \tilde{\pi}_i(x_i +1) = \frac q \alpha \tilde{\pi}_i(x_i).
\end{equation}
In particular, a stationary distribution exists if and only if $q<\alpha$, in which case $\tilde{X}_1(t),...,\tilde{X}_N(t)$ are i.i.d.  Geometric($q / \alpha$). In stationary regime, source $i$ transmits with probability
\begin{align}
\PP(\tilde{Z}_i(t) = 1) &= \sum_{x_i \in \NN} \PP(\tilde{Z}_i(t) = 1|\tilde{X}(t) = x_i) \PP(\tilde{X}(t) = x_i) \\
&= \sum_{x_i \in \NN} p_0 \alpha^x (q/\alpha)^{x_i} (1-q/\alpha) \\
&= p_0 \frac{ 1 - q/\alpha}{1 - q}
\end{align}
Therefore $q$ must obey the fixed point equation:
\begin{align}
q &= \PP\left( \prod_{j \ne i} (1-\tilde{Z}_j(t)) = 1\right) \\
  &= 1-[1-\PP\left( \tilde{Z}_i(t) = 1\right)]^{N-1}\\ 
 &= 1 - \left[1 - p_0 \frac{ 1 - q/\alpha}{1 - q} \right]^{N-1} \\
 &= f(q)
\label{eq:fixed_point}
\end{align}
This fixed-point equation always has a unique solution over $(0,\alpha)$ since $q \mapsto f(q)$ is continuous non-decreasing, $f(0)> 0$ and $f(\alpha)< \alpha$. 

 \subsection{Performance predictions}
 Let $q^\star$ the unique solution to the fixed-point equation above, which can be computed rapidly using Newton's method.
 The individual throughput of source $i$ in stationary state is
 \begin{align}
	 \lim_{T \to \infty} R_i(T) &= \PP(\tilde{Z}_i(t) \prod_{j \ne i} (1-\tilde{Z}_j(t)) ) \\
 &=  \PP(\tilde{Z}_i(t) = 1)(1-\PP(\tilde{Z}_i(t) = 1))^{N-1} \\
 &= p_0 ( 1 - q^\star/\alpha)
 \end{align}
 since $\tilde{Z}_1(t),...,\tilde{Z}_N(t)$ are i.i.d and the total goodput is
 \begin{equation} 
 \lim_{T \to \infty} \overline{R}(T) = \sum_{i \in [N]} R_i(T) =  N p_0 ( 1 - q^\star/\alpha)
 \end{equation}
 The mean-field approximation therefore predicts that slotted Aloha is  perfectly fair since all sources $i$ achieve the same goodput $R_i(T) = (1/N) \overline{R}(T)$ when $T \to \infty$. As we shall see, while in some cases the mean-field approximation captures the actual performance, sometimes its predictions are incorrect for both total goodput and  fairness, because of meta-stability.

\section{Long-Term Behavior: The Case of Two Sources}\label{sec:n2}

We now investigate the long-term behavior of the system, starting with the simple case of $N=2$ sources.

\subsection{Positive Recurrent and Null Recurrent Regimes}

  Theorem \ref{thm:twosources} shows that there exists two main regimes, depending on the value of backoff rate $\alpha$: a stable (or positive recurrent) regime when $\alpha > p_0$, and a meta-stable (or null recurrent) regime when  $\alpha \le  p_0$. Interestingly, the Markov chain is never transient: all states occur infinitely many times. 

\begin{thm}\label{thm:twosources}
	Consider $N=2$ sources. If $ \alpha > p_0$ then the Markov chain $X(t)$ is positive recurrent and admits a unique stationary distribution, solution to the balance equations $\pi = \pi P$. If $  \alpha \le p_0$ then the Markov chain $X(t)$ is null recurrent and admits no stationary distribution.
\end{thm}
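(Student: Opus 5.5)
The plan is to reduce the two-dimensional chain to a one-dimensional \emph{excursion} analysis of the larger coordinate, and to show that the expected duration of an excursion behaves like $\sum_{x}(p_0/\alpha)^x$. This series is finite if and only if $\alpha>p_0$.

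The heuristic is as follows. Suppose source $2$ sits at a high level $x$. Source $1$ then sees an almost empty channel: when it is in state $0$ it transmits with probability $p_0$ and succeeds with probability $1-O(\alpha^x)$, so it stays at $0$. If it is at a positive level $y$, it returns to $0$ after a time of order $1/(p_0\alpha^y)$. In both cases this is $O(1)$ compared to the waiting time of source $2$, which is geometric with mean $1/(p_0\alpha^x)$. When source $2$ finally transmits, source $1$ is therefore in state $0$ with probability $1-O(\alpha^x)$. Hence source $2$ collides, moving to $x+1$, with probability $p_0(1+O(\alpha^x))$, and otherwise succeeds and resets.

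The high coordinate thus performs a "climb or reset" walk:
\begin{itemize}
\item it climbs with probability $c_x=p_0(1+O(\alpha^x))<1$ at each attempt;
\item it spends time of order $\alpha^{-x}/p_0$ at level $x$.
\end{itemize}
The expected excursion length from a level $x_0$ is then comparable to
\begin{align}
\sum_{x\ge x_0}\Big(\prod_{k=x_0}^{x-1}c_k\Big)\frac{1}{p_0\alpha^{x}} \asymp \sum_{x\ge x_0}\Big(\frac{p_0}{\alpha}\Big)^{x},
\end{align}
because $\prod_k(1+O(\alpha^k))$ converges and stays bounded away from $0$ and $\infty$.

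Concretely, I would proceed in four steps.

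\emph{Step 1: recurrence in all regimes.} From any state, within a bounded number of steps and with probability bounded below uniformly in the state, the smaller coordinate is reset to $0$. This works because when $x_1\le x_2$, source $1$ succeeds with probability at least $p_0\alpha^{x_1}(1-p_0)$. From a state $(0,x)$ the climbing probability per attempt is at most $p_0<1$. So each attempt of the big source has probability at least $1-p_0-o(1)$ of ending the excursion, and the excursion ends after finitely many attempts almost surely; each attempt takes an almost surely finite time. Gluing these facts gives that $(0,0)$ is reached almost surely from any state. Irreducibility then implies recurrence, which rules out transience.

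\emph{Step 2: positive recurrence when $\alpha>p_0$.} I would apply the Foster--Lyapunov criterion with
\begin{align}
V(x)=\beta^{\max(x_1,x_2)}+\gamma\,\min(x_1,x_2)\,\beta^{\min(x_1,x_2)},
\end{align}
where $\beta\in(1/\alpha,1/p_0)$, or alternatively to the chain sampled at the successive attempt times of the leader. The drift of $\beta^{\max}$ per attempt of the leader is
\begin{align}
\beta^{x}\big(c_x\beta-1\big)<0,
\end{align}
and this must be compared with the time of order $\alpha^{-x}$ per attempt. It is cleaner to bound the expected return time to a finite set directly, by the series above. Using $\prod_k c_k\le p_0^{x-x_0}(1+o(1))$ and the fact that the expected time per attempt at level $x$ is at most $C\alpha^{-x}$, one gets a finite mean excursion length. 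It remains to control the phases where both coordinates are comparable and small; these have uniformly bounded expected duration by Step 1.

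\emph{Step 3: null recurrence when $\alpha\le p_0$.} I would lower bound the expected return time to $(0,0)$. With probability bounded below the chain reaches $(0,1)$. From there, the probability of reaching level $x$ before a reset is at least $\prod_{k<x}p_0(1-C\alpha^k)\ge c\,p_0^{x}$. Once at level $x$, the excursion spends at least a time of expected order $\alpha^{-x}$ there. Summing gives
\begin{align}
\EE[\text{return time}]\ge c'\sum_x (p_0/\alpha)^x=\infty .
\end{align}
Combined with Step 1, this shows the chain is null recurrent, so there is no stationary distribution.

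\emph{Step 4: the coupling estimate (main obstacle).} The hardest step is to make rigorous that, at the moment source $2$ (at level $x$) attempts, source $1$ is in state $0$ with probability $1-O(\alpha^x)$. The difficulty is that the two sources are coupled through collisions. I would handle it by dominating source $1$'s excursions away from $0$ by those of an isolated source with a slightly reduced success probability, $p_0\alpha^y(1-p_0\alpha^x)$. These excursions have geometric tails of bounded mean. Source $2$'s attempt time is a geometric variable independent of source $1$'s coin flips. Hence the probability that it falls inside an excursion of source $1$ is at most (mean excursion length)$\times p_0\alpha^x=O(\alpha^x)$. This summability is exactly what keeps the correction products finite in both Step 2 and Step 3, and it is what pins the threshold precisely at $\alpha=p_0$, including the boundary case.
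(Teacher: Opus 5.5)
Your proposal is correct and, apart from Step 1, takes a genuinely different route from the paper. Step 1 is the paper's own argument, which it runs on the jump chain $Y(k)$: from $(x_1,x_2)$ with $x_1\le x_2$, the next change of state is a success of source $1$ with probability at least $(1-p_0)/2$, and from $(0,x_2)$ the next change is a success of source $2$ with probability $1-p_0$. Your ``bounded number of steps'' has to be read as ``jumps'', since in real time nothing may happen for about $\alpha^{-\min(x_1,x_2)}$ slots.

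For positive recurrence, the paper dominates each coordinate by a single source that collides with probability $p_0$ per attempt. It then reads off that chain's explicit geometric stationary law, which also gives tail bounds $(p_0/\alpha)^x$. Your Foster--Lyapunov route rests on the same fact, and that fact is exact: the other source transmits with probability $p_0\alpha^{y}\le p_0$, so $c_x\le p_0$ with no correction term, and Step 4 is not needed there. A coordinate moves only when its own source transmits, so $V(x)=\beta^{x_1}+\beta^{x_2}$ with your range $\beta\in(1/\alpha,1/p_0)$ already gives
\[
\EE\big[V(X(t+1))-V(x)\,\big|\,X(t)=x\big]\le -p_0(1-p_0\beta)\max_i(\alpha\beta)^{x_i}+2p_0 .
\]
This works on the real chain without any coupling. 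It also replaces your ``direct excursion'' variant, whose gluing step invokes expected-duration bounds that Step 1 (an almost-sure statement) does not supply.

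For null recurrence, the paper argues by contradiction from the balance equations at $(x+1,1)$ and $(x+1,0)$. It obtains $\pi(x+1,0)/\pi(x,0)\ge \frac{p_0}{\alpha}\, r_x$ with $r_x=\frac{1-p_0\alpha^{x+1}}{1+\alpha^{x}-p_0\alpha^{x+1}}$. Your Steps 3--4 are the pathwise version of this computation, and the ``main obstacle'' is elementary:
\begin{itemize}
\item While the leader waits at level $x+1$, the follower cannot leave $0$, since that needs a collision.
\item So there is a single follower excursion, starting at level $1$ if the follower was at $0$ at the collision.
\item The follower wins the race between independent Geometric$(p_0\alpha)$ and Geometric$(p_0\alpha^{x+1})$ clocks with probability exactly $r_x$.
\end{itemize}
Thus the paper's ratio is your climbing probability $p_0 r_x$ times the sojourn ratio $1/\alpha$. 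In both versions, $\prod_x r_x>0$ handles the boundary case $p_0=\alpha$.

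The paper's argument is shorter and avoids stopping-time bookkeeping, but it is purely by contradiction. Yours gives an explicit lower bound of order $\sum_x (p_0/\alpha)^x$ on the mean return time and exposes the climb-or-reset mechanism, much as the paper's $T(\bar\pi\otimes x)$ recursion does for $N>2$.
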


{\bf Proof:} \underline{Recurrence}. We first prove that all states are recurrent, i.e., they are visited infinitely often, for any $p_0$ and $\alpha$.  Let $Y(k) = X(\tau_k)$ for $k \in \NN$ where time $\tau_k = \min \{t > \tau_{k-1}: X(t) \ne X(\tau_{k-1})\}$ is the $k$-th time the value of $X(t)$ changes, with $\tau_0 = 0$. $Y(k)$ is $X(t)$ sampled at those instants where it changes, so if a state that is visited infinitely often by $Y(k)$  is also visited infinitely often by $X(t)$. Since $X(t)$ is a Markov chain with probability transition matrix $P(x,x')$, $Y(k)$ is also a Markov chain, with probability transition matrix
\begin{equation}
Q(x,x') = \frac{P(x,x')}{1 - P(x,x)},\quad x,x'\in \NN^2.
\end{equation}
Let $x_1, x_2\in \NN$, and assume without any loss of generality that $x_1 \le x_2$. We prove that if $Y(k) = (x_1, x_2)$, then there is some probability that $Y(k+2) = (0,0)$, by considering the path $(x_1, x_2) \to (0,x_2) \to (0,0)$. 
The transition probabilities for $Y(k)$ are:
\begin{align}
Q((x_1,x_2),(0,x_2)) 
&=\frac{1-p_0 \alpha^{x_2} }{ 1 + \alpha^{x_2-x_1}-p_0 \alpha^{x_2}}  \\
&\geq \frac{1-p_0}{2}
\end{align}
using the bounds $1-p_0 \alpha^{x_2} \ge 1-p_0$ since $x_2 \ge 0$ and $1 + \alpha^{x_2-x_1}-p_0 \alpha^{x_2} \le 2$ since $x_1 \le x_2$, and
\begin{align}
Q((0,x_2),(0,0)) &= 1-p_0.
\end{align}
Hence for any $(x_1,x_2)\in \NN^2$ and any $k\ge 0$:
\begin{align}
\PP&( Y(k+2) = (0,0) | Y(k) = (x_1,x_2)) \\ 
&\ge Q((x_1,x_2),(0,x_2)) Q((0,x_2),(0,0))   \\
&\ge \frac{(1-p_0)^2}{2}.
\end{align}
Summing over both $(x_1, x_2)$ and $k$ shows that the expected number of visits to state $(0,0)$ is infinite. This state is recurrent for $Y(k)$ and thus also for $X(t)$. Since the Markov chain $X(t)$ is irreducible,  it is recurrent.

\underline{Positive recurrence}. Consider $\alpha> p_0$, we prove that $X(t)$ is positive recurrent by a coupling argument. Let $X^+(t)$ the value of $X(t)$ in an alternate system where, when a source transmits, a collision occurs with probability $p_0$. One may readily check that $X^+(t) \ge X(t)$ almost surely for all $t$. The entries of $X^+(t)$ are i.i.d. so it suffices to prove that $X_i^+(t)$ is positive recurrent and since $X_i^{+}(t)$ is a Markov chain, it is sufficient to prove that $X_i^{+}(t)$ admits a stationary distribution. By inspection, a stationary distribution for $X_i^{+}(t)$ is
\begin{equation}
\pi^{+}(x_i) = (p_0/\alpha)^{x_i} (1-p_0/\alpha)
\end{equation}
since it satisfies the  balance equations:
\begin{align}
\pi^{+}(x_i+1) = \pi^{+}(x_i+1) (1 - p_0 \alpha^{x_i+1}) + \pi^{+}(x_i) p_0^2 \alpha^{x_i}.
\end{align}
Hence $X_i^+(t)$ is positive recurrent, and so is $X(t)$.

\underline{Null recurrence}. We consider the case $ \alpha \le p_0$ and prove that $X(t)$ is null recurrent by contradiction. If $X(t)$ is positive recurrent it admits a stationary distribution $\pi$, with $\pi(x) > 0$ for all states $x$ by irreducibility and satisfying the balance equations $\pi = \pi P$ so that
\begin{align}
\frac{\pi(x_1+1,1)}{\pi(x_1,0)} = \frac{p_0^2 \alpha^{x_1}}{ 1 - ( 1 - p_0\alpha^{x_1+1})(1 - p_0 \alpha)}.
\end{align}
Ignoring transitions $(x_1+1,x_2) \to (x_1+1,0)$ for $x_2 \geq 2$, we obtain the lower bound:
\begin{align}
\pi(x_1+1,0) &\geq \pi(x_1+1,0) (1 - p_0\alpha^{x_1+1}) \\
&+ \pi(x_1+1,1) p_0 \alpha( 1 - p_0 \alpha^{x_1+1}).
\end{align}
Using the previous equation,
\begin{align}
\pi(x_1+1,0) &\geq \pi(x_1+1,1) \alpha^{-x_1}(1 - p_0 \alpha^{x_1+1}) \\ &= \pi(x_1,0) \frac{p_0^2 (1 - p_0 \alpha^{x_1+1}) }{1 - ( 1 - p_0\alpha^{x_1+1})(1 - p_0 \alpha)}.
\end{align}
We deduce that:
\begin{equation}
\frac{\pi(x_1+1,0)}{\pi(x_1,0)} \geq \frac{p_0^2 (1 - p_0 \alpha^{x_1+1})}{1 - ( 1 - p_0\alpha^{x_1+1})(1 - p_0 \alpha)} \sim_{x_1 \to +\infty}  \frac{p_0}{\alpha}.
\end{equation}
If $p_0 > \alpha$ then $\pi(x_1, 0)\to +\infty$, a contradiction and if $p_0 = \alpha$ 
\begin{equation}
\frac{\pi(x_1, 0)}{\pi(0, 0)} \ge \prod_{n\in \NN} \frac{1-\alpha^{n+2}}{1+\alpha^n - \alpha^{n+2}} = c > 0.
\end{equation}
Hence $\sum_{x_1 \in \NN} \pi(x_1,0) = +\infty$, and $\pi$ cannot be a stationary distribution, a contradiction,  so $X(t)$ is null recurrent. 
\subsection{Stationary Distribution in the Positive Recurrent Regime}
In the stable regime  $\alpha > p_0$, a unique stationary distribution $\pi$ exists. While this distribution cannot be computed in closed form, it can be evaluated by truncating the state space $\NN^2$ to a large enough finite subset of $\NN^2$, and then solving the balance equations numerically.

Consider the return time to state $x$, given by
$\tau(x) = 1 / {\pi(x)}.$
This is also the expected time  between two visits to state $x$. 
In Figure \ref{fig:two_sources_return_time}, we present the return time to state $(0,0)$ obtained by numerical evaluation. When $p_0 \to 0$, sources never transmit, the system remains in state  $(0,0)$ at all times and $\tau(0,0) \approx 1$; when $p_0 \to \alpha$ we have $\tau(0,0) \to \infty$: the return time to state $(0,0)$ becomes infinite, and the system tends to be meta-stable.

\begin{figure}[!ht]
\centering
\includegraphics[width=0.7\textwidth]{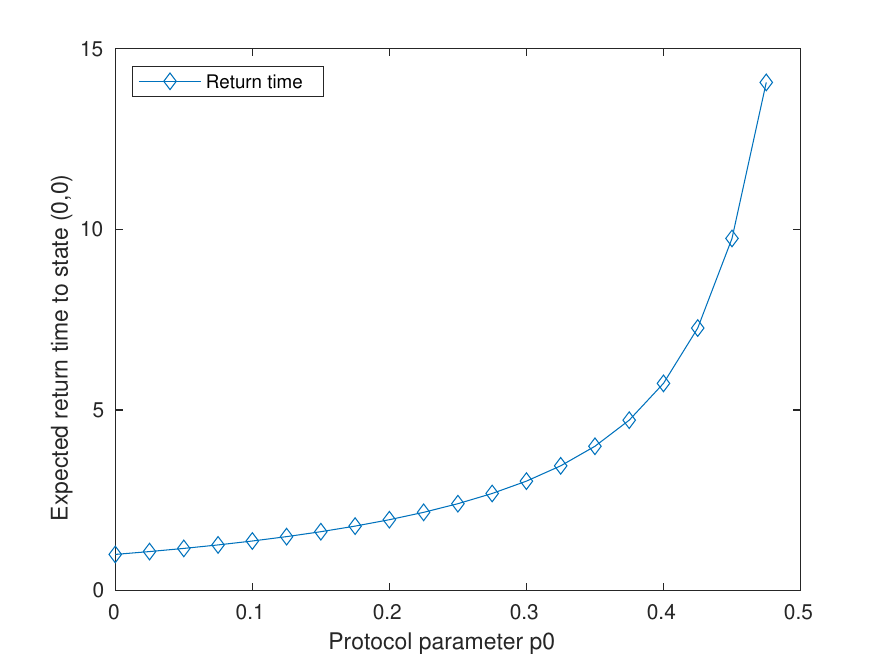}
\vspace{-.2cm}
\caption{Expected return time to state $(0,0)$ with respect to $p_0$ for $N=2$ sources, $\alpha=1/2$}
\label{fig:two_sources_return_time}
\end{figure}

\subsection{Meta-Stability in the Null Recurrent Regime}
In the meta-stable regime  $\alpha \le p_0$, it becomes more difficult to assess which parts of the state space $\NN^2$ are typically explored by the Markov chain $X(t)$. We propose the following interpretation: since the expected return time to state $(0,0)$ is infinite, this means that $X(t)$  spends a  large amount of time in regions of the state space where either the first component or the second component of $X(t)$ is large. For instance, assume that $X_2(t) \gg 1$ during a long time interval. This means that source $2$ will almost completely stop transmitting, and source $1$ will have almost free access to the channel. Hence, in the meta-stable regime, the algorithm performs an implicit \emph{admission control} in which one of the two sources is excluded while the other one is given total control of the channel, without any collisions. The reason why this is a meta-stable equilibrium is because, after a very long time, the  excluded source will eventually be able to transmit and re-integrate the system (since the Markov chain is  recurrent). 

This form of implicit admission control shows why the mean-field approximation, which is based on the assumption that the components of $X(t)$ are i.i.d., cannot accurately predict the system behavior.  The components of $X(t)$ are \emph{strongly correlated} in the meta-stable regime: if $X_1(t)$ is very large then $X_2(t)$ must be close to $0$, and vice-versa. 

In Figure \ref{fig:two_sources_state_time} we present a typical trajectory of $X(t)$ for $N=2$ sources in the meta-stable regime where $p_0=2/3$ and $\alpha=1/2$ during a time interval of $T= 10^5$ time slots. We see that the process alternates between epochs in which one of the sources does not transmit, and the other source behaves as if she were alone in the system. To emphasize this, we draw thick lines to denote long intervals (of length greater than $10^3$ time slots) during one of the  sources does not transmit. Not only does this illustrate meta-stability, but it also suggests that meta-stability can  be \emph{desirable} in the sense that it alternates between state-space regions in which very few collisions  occur: in short, exponential backoff in the meta-stable regime enables to approximate the optimal centralized algorithm (time division multiple access) in a completely decentralized manner. 

\begin{figure}[!ht]
\centering
\includegraphics[width=0.7\textwidth]{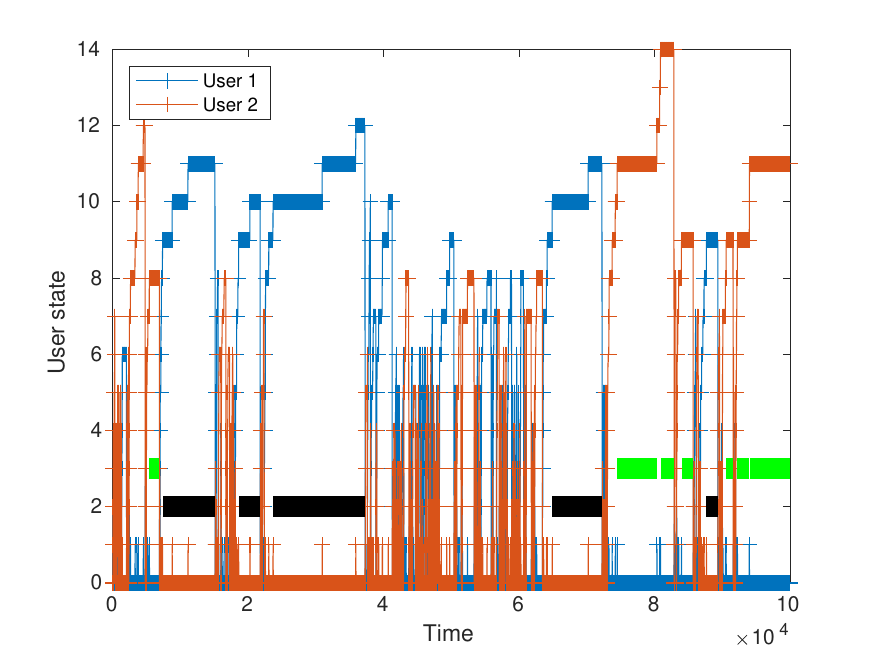}
\caption{Example of trajectory for $N=2$ sources, $p_0=2/3$, $\alpha=1/2$}
\label{fig:two_sources_state_time}
\end{figure}

\section{Long-Term Behavior: General Case}
\label{sec:n}

We now consider the long-term behavior in the  case $N > 2$.

\subsection{Stability Region}

The stability region is characterized by the following result, which is a major contribution of our work. The proof, which is based on  delicate coupling arguments, is detailed in the rest of the section.

\begin{thm}\label{thm:N_sources}
	Consider $N>2$ sources. Let $\psi(n,\alpha)$ be the probability that there exists at least one source attempting to transmit in stationary state, in a system with $n \in [N]$ sources.  
\begin{itemize} 
\item If $\alpha > \max_{n \in [N-1]} \psi(n,\alpha)$ then $X(t)$ is positive recurrent, and hence admits a unique stationary distribution. 
\item If $\alpha < \psi(N-1,\alpha)$ then $X(t)$ is not positive recurrent, no stationary distribution exists, and the expected return time to initial state $(0,...,0)$ is infinite. 
\end{itemize}
In particular,   $X(t)$ is positive recurrent whenever $\alpha > 1 - (1-p_0)^{N-1}$.
\end{thm}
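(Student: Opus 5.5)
We argue by induction on $N$, using a comparison for a single "tagged" source against the other $N-1$. For the positive recurrence part, we use a Foster--Lyapunov criterion. The key quantity is the drift of source $i$ when $X_i$ is large. When $X_i(t)=x$ is large, source $i$ transmits with probability $p_0\alpha^{x}\approx 0$. On the long time scale of order $\alpha^{-x}$ between its attempts, the other $N-1$ sources therefore evolve as an isolated system of $n\le N-1$ sources; sources with large state can likewise be removed, which is why the maximum over $n\in[N-1]$ appears. By the induction hypothesis, which for $n\le 2$ is Theorem~\ref{thm:twosources} together with the trivial one-source case, this subsystem is positive recurrent when $\alpha>\max_{n\le N-1}\psi(n,\alpha)$. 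Each attempt of source $i$ then collides with probability close to $\psi(n,\alpha)$, the stationary probability that some other source transmits.

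Each attempt moves $X_i$ from $x$ to $x+1$ with probability about $\psi$, and to $0$ otherwise. A test function $V(x)=\sum_i \beta^{x_i}$ with $\beta\in(1,1/\psi)$ chosen suitably (compare the geometric stationary law $(q/\alpha)^{x}$ of the mean-field chain) has negative drift per attempt exactly when $\psi<\alpha$. To see this, the per-slot drift is roughly $p_0\alpha^{x}\big(\psi\beta^{x+1}+(1-\psi)-\beta^{x}\big)$. This must be normalized by the attempt rate, so we use instead a state-dependent time change: we sample the chain at the attempt times of the sources with large state, or equivalently use Foster's criterion with state-dependent drift horizons (Fayolle--Malyshev--Menshikov). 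For that we choose $V(x)=\sum_i \gamma^{x_i}$ with $\alpha<\gamma^{-1}$... The cleanest choice is $\beta$ with $\psi/\alpha<\beta^{-1}\cdot$, tuned so that the factor $\alpha^{x}\beta^{x}$ still yields a negative drift proportional to $(\psi-\alpha)$. For the small states we use a compact-set argument: there, transitions to $(0,\dots,0)$ have probability bounded below, exactly as in the recurrence part of the proof of Theorem~\ref{thm:twosources}. The final claim then follows by the crude bound $\psi(n,\alpha)\le 1-(1-p_0)^n\le 1-(1-p_0)^{N-1}$, since each source transmits with probability at most $p_0$; this bound can also be obtained directly from the coupling used in Theorem~\ref{thm:twosources}, where every attempt collides with probability at most $1-(1-p_0)^{N-1}$.

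For the negative part ($\alpha<\psi(N-1,\alpha)$), we mimic the null-recurrence argument of Theorem~\ref{thm:twosources} rather than use a balance-equation computation. Start source $1$ at a large level $x$ and the others in the stationary regime of the $(N-1)$-source system. We couple source $1$ with a chain whose attempts collide with probability at least $\psi(N-1,\alpha)-\epsilon$, valid while the others are near stationarity. The time spent at level $y$ is about $\alpha^{-y}/p_0$, and level $y+1$ is reached before reset with probability about $\psi$. The expected time to return to $0$ is therefore at least $\sum_y \psi^{y}\alpha^{-y}=\infty$ when $\psi>\alpha$. Since the chain reaches such configurations from $(0,\dots,0)$ with positive probability, the expected return time to $(0,\dots,0)$ is infinite, so the chain is not positive recurrent.

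The main obstacle is justifying the time-scale separation: that while $X_i$ is large, the empirical collision frequency experienced by source $i$ at its rare attempts is close to $\psi(n,\alpha)$. This requires mixing of the $(N-1)$-source subsystem, whose positive recurrence comes from the induction hypothesis, together with uniform control of its return times. We would first handle this with a regeneration argument at returns of the subsystem to $(0,\dots,0)$, combined with a law of large numbers over the $\alpha^{-x}$ slots between attempts.
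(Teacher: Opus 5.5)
Your overall plan follows the paper: induction on $N$, a tagged source, a Foster--Lyapunov criterion with state-dependent horizons, and, for the converse, a recursion on the expected time for the tagged source to reach $0$ while the others are stationary. The positive-recurrence part, however, has two real holes.

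First, the test function is not correctly determined. For $V=\sum_i\beta^{x_i}$, the per-attempt drift $\beta^{x}(\psi\beta-1)+(1-\psi)$ is negative for large $x$ for \emph{every} $\beta<1/\psi$, whatever $\alpha$ is. So it is not ``negative exactly when $\psi<\alpha$'', and your condition $\alpha<\gamma^{-1}$ points the wrong way. Foster's criterion needs a drift proportional to the horizon length, which is about $\alpha^{-x}/p_0$. Equivalently, the per-slot drift $\approx p_0(\alpha\beta)^{x}(\psi\beta-1)$ must stay bounded away from $0$. This forces $1/\alpha\le\beta<1/\psi$, which is possible iff $\psi<\alpha$. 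You also want this drift to grow in $x$, so that it absorbs the bounded positive drift of the other coordinates. That is why the paper takes $v(x)=x\alpha^{-x}$ and obtains $\Delta_1(x)\le-\epsilon p_0x_1+3p_0T$; a $\beta$ strictly inside $(1/\alpha,1/\psi)$ would also work.

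Second, and more seriously, ``sources with large state can be removed'' plus a regeneration argument does not give the uniform control you correctly flag as the main obstacle. The other sources can sit at every intermediate level. Regeneration or coupling times have no uniform bound over unbounded initial states. Moreover, a ``large'' source that transmits and resets shortly before the tagged attempt knocks its cluster out of stationarity, since it now transmits with probability $p_0$. The collision probability at that attempt then need not be close to $\psi$.

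The missing idea is the paper's scale hierarchy, in three parts.
\begin{enumerate}
\item Choose $b_1,b_2,\dots$ recursively with $2(k+1)p_0\alpha^{b_k}\Phi(\sum_{i<k}b_i)<\epsilon$ and $1-(1-p_0\alpha^{b_k})^k\le\epsilon$. Here $\Phi(B)$ is the maximal expected coupling time of a system of at most $N-1$ sources started below $B$; it is finite by the induction hypothesis, being a maximum over a finite set.
\item Sort the other states and cut at the first gap $x_{i+1}-x_i\ge b_i$. The lower cluster then has states bounded by $\sum_{i<K}b_i$, and each upper source transmits at rate at most $p_0\alpha^{b_K}$.
\item End the horizon at $\mathcal{T}=\min(\tau_1,\tau_{K+1},\dots,\tau_N,T)$, the first attempt of the tagged source \emph{or of any upper source}.
\end{enumerate}
On $[0,\mathcal{T}]$ the lower cluster evolves in isolation. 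Given $\tau_1=\mathcal{T}$, the collision probability is at most $\sigma+\PP(\tau_1\le\phi\mid\tau_1=\mathcal{T})+\psi(K-1,\alpha)\le 2\epsilon+\Psi$.

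Your converse has the same unresolved step in milder form. Each collision knocks the colliding sources out of stationarity, and you must show they re-mix before the tagged source's next attempt. The paper does this with the coupling times $U_z$ started from the perturbed laws $\bar\pi_z$, showing $\PP(U_z\ge W)\to 0$ as $x\to\infty$. This yields $T(\bar\pi\otimes x)\ge 1/(p_0\alpha^{x})+(1-\epsilon)\psi(N-1,\alpha)\,T(\bar\pi\otimes(x+1))$ for large $x$.
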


 Theorem~\ref{thm:N_sources}  shows that there are cases in which the Markov chain is positive recurrent, and other cases where the Markov chain  is not positive recurrent. In the latter case, the expected return time to $(0, \ldots, 0)$ is infinite, so that in the long run, a form of implicit admission control occurs and a subset of sources is blocked, the others accessing the channel with few collisions. Observe that this key phenomenon is not captured by  the mean-field approximation.
 
 The result also shows that, by scaling the parameters of the exponential backoff algorithm with $N$, it is always possible to  ``force'' the Markov chain to be positive recurrent, which is in fact the only regime considered in  the literature; the phenomenon of meta-stability, which occurs in the  realistic scenario where the protocol parameters do {\it not} depend on $N$, is precluded. Interestingly,  the long-term behavior depends  on $\psi(n,\alpha)$,  the probability that at least one source transmits in a system with only  $n < N$ sources. In particular, $\psi(N-1,\alpha)$ can be interpreted as the collision probability seen by a  source that is removed from the system and attempts to re-enter it. It is therefore natural to interpret this value as the ``ambient noise'' seen by a source, which is reminiscent (but very different, as the source is removed from the system) of the collision probability $q$ considered in the mean-field approximation.

\subsection{Proof: Positive Recurrence}
Let $\Psi =\max_{n \in [N-1]} \psi(n,\alpha)$. We first consider the case  $\alpha > \Psi$ and prove that $X(t)$ is positive recurrent. We proceed by recursion, and show that if the results holds for at most $N-1$ sources, it also holds for $N$ sources. 

\underline{Step 1: Preliminary}
Since $\Psi < \alpha < 1$, there exists $\epsilon > 0$ such that $\Psi + 2\epsilon < \alpha(1-\epsilon)$. Define $v(x) = x \alpha^{-x}$ and $V(x)= \sum_{i\in[N]} v(x_i)$. Given $k \in [N-1]$ and $y = (y_1,...,y_k)$, define $\phi(y,k)$ the coupling time of process $(X_{1}(t),...,X_{k}(t))$ starting in state $y$, in an alternate system where sources $k+1,...,N$ have been removed. Since $\alpha > \Psi \ge \psi(k,\alpha)$ and we have assumed that the result holds for any $k \le N-1$, $\phi(y,k)$ has finite expectation. For any $B > 0$, define 
$
\Phi(B) = \max_{||y||_\infty \leq B,  k \in [N-1]} \EE(\phi(y,k)) < \infty
$
since we have taken the maximum over a finite set. Define a sequence $(b_1,...,b_{N}) \in \NN^N$ such that, for all $k \in [N-1]$: 
$
	2 (k + 1)p_0 \alpha^{b_k} \Phi\left( \sum_{i=1}^{k-1} b_i \right)  < \epsilon
$
and $1 - (1-p_0 \alpha^{b_k})^{k} \leq \epsilon$. This sequence can be constructed by induction on $k$. Define $\bar{b} = \sum_{i \in [N]} b_i$. Now choose $T \geq 0$ such that for all $k$:
\begin{equation}
\sum_{t \in [T]} (1-p_0 \alpha^{b_k})^{(t-1)(k+1)} \geq \frac{1}{2 (k + 1)p_0 \alpha^{b_k}},
\end{equation}
using the inequality $(1+u)^{k+1} \geq 1 + (k+1) u$. Such a $T$ exists since the sum converges when $T \to \infty$.

\underline{Step 2: Stopping time} Let $x = (x_1,...,x_N)$, with $x_1 \geq \bar{b}$. Without loss of generality assume that $x_2 \leq x_3 \leq ... \leq x_N$. Next define 
$
K = \min \{ i \geq 2: x_{i+1} - x_{i} \geq b_i\}.
$
with $K = N$ if this set is empty. It is noted that $x_2,...,x_K \leq \sum_{i \in [K-1]} b_i$, and that $x_{K+1},...,x_N \geq x_K + b_K \geq b_K$. We partition $\{2,...,N\}$ in two sets $\cS = \{K+1,...,N\}$ and $\cS' = \{2,...,K\}$. Define $\tau_1,\tau_{K+1},...,\tau_{N}$ the first transmission times of sources $1,K+1,...,N$. Define the stopping time  
$
\cT = \min(\tau_1,\tau_{K+1},...,\tau_N,T)
$
Define $\phi$ the coupling time of $(X_2,...,X_K)$ in an alternate system where sources in $\{1\} \cup \cS$ have been removed.

\underline{Step 3: Drift} Let \begin{equation}\Delta_1(x) = \EE( v( X_1(\cT + 1)) | X(0) = x ) - v(x_1)\end{equation} For $A \subset \{2,...,N\}$, define $\cC_A(t)$ the event $\{ \max_{i \in A} Z_i(t) = 1\}$ i.e. a source $i \in A$ transmits at time $t$. Consider the system starting in state $x$ at time $0$. If $\tau_1 \neq \cT$ we have $X_1(\cT + 1) = x_1$. If $\tau_1 = \cT$ and $\cC_{\cS \cup \cS'}(\tau_1)$ occurs, then $X_1(\cT + 1) = x_1 + 1$. Finally, if $\tau_1 = \cT$ and $\cC_{\cS \cup \cS'}(\tau_1)$ does not occur, $X_1(\cT + 1) = 0$. Hence, the drift $\Delta_1(x)$ equals 
\begin{align}
\PP( \tau_1 = \cT ) (v(x_1 + 1) \PP(\cC_{\cS \cup \cS'}(\tau_1) | \tau_1 = \cT) - v(x_1) ). 
\end{align}
\underline{Step 4: Collision probabilities} Times $\tau_1,\tau_{K+1},...,\tau_{N}$ are independent and geometrically distributed with respective parameters $p_0 \alpha^{x_1}, p_0 \alpha^{x_{K+1}}, ..., p_0 \alpha^{x_{N}}$. Define $a = p_0 \alpha^{x_1}$ and $\sigma = 1 - \prod_{i=K+1}^N (1 - p_0 \alpha^{x_i})$ if $K < N$ and $\sigma = 0$ otherwise. We deduce:
\begin{align}
\PP( \tau_1 = \cT ) = a \sum_{t=1}^T ( (1-a)(1-\sigma) )^{t-1}.
\end{align}
Recall that $x_1 \geq \bar{b} \geq b_{K}$ and $x_{K+1},...,x_N \geq b_K$. Hence $1 - a = 1 - p_0 \alpha^{x_1} \geq 1 - p_0 \alpha^{b_K}$ and 
$1 - \sigma =  \prod_{i=K+1}^N (1 - p_0 \alpha^{x_i}) \geq (1-p_0 \alpha^{b_K})^{K}$, 
and by definition of $T$:
\begin{align}
	\PP( \tau_1 = \cT ) &\geq a \sum_{t=1}^{T} (1-p_0 \alpha^{b_K})^{(t-1)(K+1)}  \\
			    &\geq  \frac{a}{2 (K + 1) p_0 \alpha^{b_K}}
\end{align}
We now upper bound the collision probability $\PP(  \cC_{\cS \cup \cS'}(\tau_1) | \tau_1 = \cT)$ by decomposing $\cS \cup \cS'$:
\begin{align}
\PP(  \cC_{\cS}(\tau_1) | \tau_1 = \cT) = \frac{\PP(\tau_1 = \min(\tau_{K+1},...,\tau_{N})) }{\PP( \tau_1 = \cT )} = \sigma 
\end{align}
which is smaller than $1 - (1-p_0 \alpha^{b_K})^{K} \leq \epsilon$ and 
\begin{align}
	\PP(  \cC_{\cS'}(\tau_1) | \tau_1 = \cT) \leq \PP( \tau_1 \leq \phi | \tau_1 = \cT) \\+ \PP(\cC_{\cS'}(\tau_1), \tau_1 > \phi | \tau_1 = \cT)
\end{align}
We have:
\begin{align}
	\PP(\tau_1 \leq \phi) &=  \sum_{t=1}^T a(1-a)^{t-1} \PP( \phi \ge t)  \\
			      &\leq a \EE(\phi)  \\
			      &\leq a \Phi\left( \sum_{i=1}^{K-1} b_i\right)
\end{align}
Using the previous lower bound on $\PP(\tau_1 = \cT)$:
\begin{align}
	\PP(\tau_1 \leq \phi | \tau_1 = \cT) &= \frac{\PP(\tau_1 \leq \phi , \tau_1 = \cT) }{  \PP(\tau_1 = \cT)} \\
					     &\leq \frac{a \Phi\left( \sum_{i=1}^{K-1} b_i\right) }{ \PP(\tau_1 = \cT)}  \\
					     &\leq  2 (K + 1) p_0 \alpha^{b_K} \Phi\left(\sum_{i=1}^{K-1} b_i\right)  \\
					     &\leq \epsilon
\end{align}
by definition of $(b_1,...,b_N)$. Finally, when $\tau_1 > \phi$, sources in $\cS'$ are in stationary state at time $\tau_1$, so that, using the strong Markov property
\begin{align}
\PP(\cC_{\cS'}(\tau_1), \tau_1 > \phi | \tau_1 = \cT) \\
\leq \PP(\cC_{\cS'}(\tau_1) |\tau_1 > \phi, \tau_1 = \cT)  = \psi(k,\alpha) \leq \Psi, 
\end{align}

\underline{Step 5: Drift bound} We get:
\begin{align}
\Delta_1(x) &\leq  \PP( \tau_1 = \cT ) (v(x_1 + 1) (2 \epsilon + \Psi) - v(x_1)) \\
&= \alpha^{-x_1} \PP( \tau_1 = \cT )  ((x_1 + 1)(2 \epsilon + \Psi) \alpha^{-1} - x_1)	\\
&=x_1( (2 \epsilon + \Psi) \alpha^{-1} - 1) \alpha^{-x_1} \PP( \tau_1 = \cT ) \\
&+ \alpha^{-x_1} \PP( \tau_1 = \cT )(2 \epsilon + \Psi)
\end{align}
Consider the first term:
\begin{align}
x_1( (2 \epsilon + \Psi) \alpha^{-1} - 1) \alpha^{-x_1} \PP( \tau_1 = \cT ) \leq -\epsilon p_0  x_1
\end{align}
where we used the fact that $(2 \epsilon + \Psi) \alpha^{-1} \leq 1 - \epsilon$ and $\PP( \tau_1 = \cT ) \ge a$.
Consider the second term: 
\begin{align}
\alpha^{-x_1} \PP( \tau_1 = \cT )(2 \epsilon + \Psi) \leq 3 p_0 T
\end{align}
since $\epsilon < 1$, $\Psi \le 1$ and $\PP( \tau_1 = \cT) \ge a$.
We have obtained the following drift bound:
\begin{align}
\Delta_1(x)  \le  -\epsilon p_0 x_1  + 3 p_0 T.
\end{align}
The above reasoning is valid for all $x$ with $x_1 \geq \bar{b}$. To the contrary, if $x_1 \le \bar{b}$ we have that $v(X_1(\cT + 1)) \le v(\bar{b}+ 1)$ almost surely. 
Hence for all $x$: 
\begin{align}
	\Delta_1(x) &\le (-\epsilon p_0 x_1 + 3 p_0 T)\indic\{ x_1 \geq \bar{b} \}  + v( \bar{b} + 1)) \indic\{ x_1 < \bar{b}\} \\
&\le -\epsilon p_0 x_1 \indic\{ x_1 \geq \bar{b} \} +  3 p_0 T + v( \bar{b} + 1).
\end{align}
Repeating the same reasoning for other components of $x$ and summing we obtain the drift bound:
\begin{align}
\Delta(x) &\leq  -\epsilon p_0 \sum_{i=1}^N x_i \indic\{ x_i \geq \bar{b} \} + N\left(3 p_0 T + v\left( \bar{b} + 1\right)\right),
\end{align}
where $\Delta(x) = \sum_{i \in [N]} \Delta_i(x)$. Therefore, there exists a finite set $F \subset \NN^N$ such that for all $x \not\in F$, $\Delta(x) \leq -1$ Recall that $\cT \leq T$ a.s., hence $\cT$ has finite expectation. Applying \cite[Theorem 2.1]{meyn2012}, presented below, ensures that $X$ is positive recurrent, which concludes the proof.

\begin{lem}[\cite{meyn2012}]\label{le:meyn}
	Consider $X$ a discrete time, irreducible Markov chain, $\{\cT_i , i \in \NN\} \subset \NN$ an increasing sequence of stopping times, $V$ a positive function, $C$ a finite set and a constant $b$ such that:
	\begin{align}
		\EE( V( X(\cT_{i+1})) | \cF_{\cT_i}) \leq V( X(\cT_{i})) - 1 + b \indic\{ X(\cT_{i}) \in C\}
	\end{align}
	and $\sup_i \EE( \cT_{i+1} - \cT_i|  \cF_{\cT_i} ) < \infty$. Then $X$ is positive recurrent.
\end{lem}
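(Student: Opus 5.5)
The plan is a Foster--Lyapunov argument along the subsequence $(\cT_i)$, in three stages. First, bound the expected number of indices needed to reach $C$. Second, convert this into a bound in real time. Third, deduce that the chain spends a positive fraction of time in the finite set $C$, which rules out null recurrence and transience.

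\textbf{Index count.} Write $M = \sup_i \EE(\cT_{i+1}-\cT_i \mid \cF_{\cT_i}) < \infty$ and $\sigma_j = \min\{ i > j : X(\cT_i) \in C\}$. For $i \ge j$ set $W_i = V(X(\cT_{i\wedge \sigma_j})) + (i\wedge\sigma_j)$. Before $\sigma_j$ the indicator term vanishes (for $i>j$), so the drift hypothesis makes $W_i$ a nonnegative supermartingale with respect to $\cF_{\cT_i}$. The first step from $j$ may start in $C$; it only adds $b$. Optional stopping at $\sigma_j\wedge n$, then monotone convergence in $n$ (using $V\ge 0$), gives
\begin{align}
\EE(\sigma_j - j \mid \cF_{\cT_j}) \le V(X(\cT_j)) + b + 1 .
\end{align}
In particular, from any sampled visit to $C$ the expected number of indices to the next sampled visit is at most $c_0 = \max_{y\in C} V(y) + b + 1$. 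This is finite because $C$ is finite.

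\textbf{Real time.} Write $\cT_{\sigma_j} - \cT_j = \sum_{i \ge j} (\cT_{i+1}-\cT_i)\indic\{ i < \sigma_j\}$. The event $\{i<\sigma_j\}$ is $\cF_{\cT_i}$-measurable, so conditioning each term on $\cF_{\cT_i}$ gives a Wald-type bound:
\begin{align}
\EE(\cT_{\sigma_j} - \cT_j \mid \cF_{\cT_j}) \le M\, \EE(\sigma_j - j\mid \cF_{\cT_j}) .
\end{align}
Iterating from a starting state $x$, let $S_k$ be the real time of the $k$-th sampled visit to $C$. Then $\EE(S_k) \le c_1 + k M c_0$, where $c_1 = M(V(x)+b+1)$ accounts for the first visit. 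By Markov's inequality, $\PP(S_k \le 2(c_1+kMc_0)) \ge 1/2$. The times $\cT_i$ are strictly increasing, so distinct sampled visits are distinct real visits. Hence the expected number $N_C(t)$ of visits of $X$ to $C$ before $t = 2(c_1+kMc_0)$ satisfies $\EE N_C(t) \ge k/2$. Letting $k\to\infty$ gives $\liminf_t \frac1t \sum_{s\le t}\sum_{y\in C} P^s(x,y) \ge 1/(4Mc_0) > 0$.

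\textbf{Conclusion.} For an irreducible chain that is transient or null recurrent, the Ces\`aro averages $\frac1t\sum_{s\le t}P^s(x,y)$ tend to $0$ for every $y$. This follows from the ratio limit/renewal theorem, since the mean return time is infinite. Because $C$ is finite, the sum over $y\in C$ would then also tend to $0$, contradicting the bound above. Hence $X$ is positive recurrent.

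The step I expect to need the most care is the second one. The $\cT_i$ are general stopping times, so the sampled process $X(\cT_i)$ need not be Markov, and I avoid ever treating it as such. The argument relies only on the filtration $\cF_{\cT_i}$, the conditional bound $M$, and the measurability of $\{i<\sigma_j\}$, so that Wald-type conditioning and the iteration over successive visits to $C$ are legitimate. I would check this measurability and the optional stopping with unbounded $\sigma_j$ carefully, truncating at $n$ and then letting $n\to\infty$.
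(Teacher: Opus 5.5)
Your proof is correct. The paper itself gives no proof of this lemma: it quotes it from \cite{meyn2012} (Theorem 2.1) and uses it as a black box. So your argument is a self-contained replacement for a citation, and the useful comparison is with the standard proof of such Foster-type criteria along stopping times.

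Your first two steps are that standard argument. The stopped process $V(X(\cT_{i\wedge\sigma_j}))+(i\wedge\sigma_j)$ is a nonnegative supermartingale. The uniform bound $M$ on conditional increments then turns the index bound into a bound of order $M(V(x)+b)$ on the expected real-time hitting time of $C$.

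You depart from the usual route only in the last step:
\begin{itemize}
\item The usual argument concludes from ``irreducible chain plus a finite set with uniformly bounded mean return times''. Typically one considers the chain watched on $C$, which is finite and irreducible, hence positive recurrent. A Wald-type summation of excursion lengths then gives a finite mean return time to a fixed state.
\item You instead convert the hitting-time bound into linear growth of the expected occupation time of $C$. You then invoke the Ces\`aro limit $\frac1t\sum_{s\le t}P^s(x,y)\to 1/\EE_y(\text{return time to } y)$, which is $0$ in the transient and null recurrent cases.
\end{itemize}
Your version is shorter and avoids the induced-chain step. The price is reliance on the renewal/ergodic theorem and a restriction to countable state spaces. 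The standard route gives an explicit bound on the mean return time to a single state, and it extends to general state spaces via petite sets.

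Two points should be stated explicitly.
\begin{itemize}
\item Your constant $c_1$ presumes $\cT_0=0$, or at least $X(\cT_0)=x$ and $\EE\cT_0<\infty$.
\item The strict monotonicity of the $\cT_i$, which you use in the counting step, is genuinely necessary. If ties $\cT_{i+1}=\cT_i$ were allowed on $C$, the drift inequality there would hold trivially with $b=1$, and the sampled times could freeze. The lemma then fails. For example, take the null recurrent chain that jumps from $0$ to $n\ge 1$ with probability proportional to $n^{-2}$ and then steps deterministically back to $0$, with $V(n)=n+1$ and $C=\{0\}$.
\end{itemize}
Finally, the $+1$ in your index bound is harmless but unnecessary: optional stopping already gives $V(X(\cT_j))+b$.
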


\subsection{Null Recurrence or Transience}

We now consider the case  $\alpha < \psi(N-1,\alpha)$ and prove that $X(t)$ cannot be positive recurrent. Let $\bar X(t) = (X_1(t),...,X_{N-1}(t))$. Define $\bar{\pi}$ the stationary distribution of $\bar X(t)$ in a system where source $N$ has been removed. Define set ${\cal Z} = \{0,1\}^{N-1} \setminus (0,...,0)$. For $x \in \NN$, we write $X(t) \sim \bar{\pi} \otimes x$ to denote $\bar X(t) \sim \bar{\pi}$ and $X_N(t) = x$. For distribution $\mu$, define $T(\mu) = \EE_{X(0) \sim \mu}(\zeta)$ with $\zeta = \min\{t: X_N(t) = 0\}$, the expectation of the first time $X_N(t) = 0$ if $X(0) \sim \mu$. $T(\mu)$ is a lower bound on the first expected hitting time of state $(0,...,0)$.

If $X(0) \sim \bar{\pi} \otimes x$, then either (i) $Z_{N}(0) = 0$ so $X(1) \sim \bar{\pi} \otimes x$ by stationarity or (ii) $Z(0) = (0,...,0,1)$ so that $X_N(1) = 0$ or (iii) $Z(0) = (z_1,...,z_{N-1},1)$ with $z \in \cZ$ so that $X(1) \sim \bar{\pi}_y \otimes (x + 1)$, where $\bar{\pi}_z$ is the distribution of the sum of $z$ and a random variable distributed as $\bar{\pi}$. Summing over the cases: 
\begin{align}
	T(\bar{\pi} \otimes x) &= 1 + (1 - p_0 \alpha^{x}) T( \bar{\pi} \otimes x) \\
&+  p_0 \alpha^{x} \sum_{z \in \cZ} \bar{F}(z) T(\bar{\pi}_z \otimes (x+1))
\end{align}
with $\bar{F}(z) = \PP(\bar Z(0) = z | \bar X(0) \sim  \bar{\pi})$, so that:
\begin{align}
T(\bar{\pi} \otimes x) = \frac{1}{p_0 \alpha^{x}} + \sum_{z \in \cZ} \bar{F}(z) T(\bar{\pi}_z \otimes (x+1)).
\end{align}
For any $z \in \cZ$, define $U_z$ the coupling time of $\bar X(t)$ in a system where source $N$ has been removed, with starting distribution $\bar{\pi}_z$. Define $W = \min \{t \geq 0: Z_N(t) = 1\}$ the first transmission time for source $N$. If $U_z\leq W$ then $X(U_z) \sim \bar{\pi} \otimes x$, hence:
\begin{align}
T(\bar{\pi}_z \otimes x) \geq \PP(U_z \leq W) T(\bar{\pi} \otimes x)
\end{align}
Since $U_z$ and $W$ are independent, and $W \sim$ Geometric$(p_0 \alpha^{x})$:
\begin{align}
	\PP(U_y \geq W) &= p_0 \alpha^{x} \sum_{t \geq 0} \PP(U_z \geq t) (1-p_0 \alpha^{x})^t.
\end{align}
As $\EE(U_z) < \infty$ we have $\lim_{t \to \infty} \PP(U_z \geq t) = 0$ and $\lim_{x \to \infty} \PP(U_z \geq W) = 0$ using the fact that if $(\alpha_t)_t$ is a positive convergence sequence, then $\lim_{t \to \infty}  \alpha_t = \lim_{\gamma \to 0} \sum_{t \ge 0} \gamma (1-\gamma)^{t} \alpha_t$. So there exists $x(\epsilon,z)$ such that for all $x \geq x(\epsilon,z)$ we have 
$
T(\bar{\pi}_y \otimes x) \geq (1 -\epsilon) T(\bar{\pi} \otimes x)
$
Since $\cZ$ is finite, setting $x(\epsilon) = \max_{z \in \cZ} x(\epsilon,z)$ the above holds for all $z \in \cZ$ and
\begin{align}
T(\bar{\pi} \otimes x) &\geq \frac{1 }{ p_0 \alpha^{x}} +  (1-\epsilon) T(\bar{\pi} \otimes (x+1))  \sum_{y \in \cZ} \bar{F}(y) \\
		       &= \frac{1}{p_0 \alpha^{x}} +  (1-\epsilon) \psi(N-1,\alpha) T(\bar{\pi} \otimes (x+1)).
\end{align}
By induction on $x$ we deduce that:
\begin{align}
T(\bar{\pi} \otimes x) \geq \frac{1 }{p_0 \alpha^{x}} \sum_{k \geq 0} \left(\frac{  (1-\epsilon) \psi(N-1,\alpha) }{\alpha}\right)^{k}   
\end{align}
For $\epsilon$ small enough $(1-\epsilon) \psi(N-1,\alpha) \geq \alpha$, hence $T(\bar{\pi} \otimes x) = \infty$ and the system is not positive recurrent.

\section{Numerical Experiments}
\label{sec:simu}

Finally, we compare numerical results, obtained with a discrete-time simulator, to the mean-field (MF) approximation, both in the stable regime, where the Markov chain is positive recurrent,  and in the meta-stable regimes. Observe that in the latter case, no stationary distribution exists and the mean-field approximation is inaccurate. For this reason, we consider the following  {\it finite-time} mean-field approximation  (FTMF): let $\tilde{\pi}_t(x)$ be the  probability that a given source has experienced $x$ successive collisions at time $t$. At  time $t=0$, we have $\tilde{\pi}_0=\delta_0$ (no collision); then, $\tilde{\pi}_{t+1}$ is deduced from $\tilde{\pi}_{t}$ by the transition probabilities from~\eqref{eq:mf_transition}:
\begin{align}
\tilde{\pi}_{t+1}(x) & = (1-p_0\alpha^x)\tilde{\pi}_t(x)+q(t)p_0\alpha^{x-1}\tilde{\pi}_t(x)\text{ for $x>0$,}
\end{align}
where  $q(t)$ is the probability that another source transmits at time $t$:
\begin{equation}
q(t) = 1 - \Big(1-p_0\sum_{x\geq 0}\alpha^x\tilde{\pi}_t(x)\Big)^{N-1}.
\end{equation}
FTMF is based on  the same decoupling assumption as MF, applied in finite time. It is expected that FTMF ultimately converges to MF and thus becomes inaccurate in the meta-stable regime. 

In order to facilitate the observation of meta-stability, we introduce the notion of time-scales $S \in \NN$, with  time-scale $S$ corresponding to time interval $2^S \le t < 2^{S+1}$. Each time-scale is preceded by a warm-up period of the same length (up to 1 slot).
The simulations presented here extend up to scale $S=30$, amounting to $2^{31}-1$ time slots in total. For reference, if one slot equals $50 \mu s$\cite{bianchi1996performance}, then $S=10$ corresponds to $50 ms$, $S=20$ to just under a minute, and $S=30$ to over half a day.

For each value of $S$, the performance metrics are averaged over the corresponding period. 
To illustrate several scenarios, we consider the following model parameters:
\begin{itemize}
    \item $N=4$ (few sources) and $N=1024$ (many sources);
    \item $\alpha=1/2$, so that the backoff time is doubled after each collision, mimicking the doubling of the contention window;
    \item $p_0=2/17$ and $p_0=2/3$: The former corresponds to the same average backoff time as with a  contention window of size ${\rm CWmin}=16$ (i.e., uniform distribution over $\{1, 2, \ldots, 16\}$); this is a typical setting in CSMA~\cite[Table I]{bianchi2}; the latter corresponds to a setting with high access probability, giving rise to meta-stability.
\end{itemize}
Our experiments are fully reproducible and the code is publicly available as a Python package (\footnote{available at \url{https://pypi.org/project/slotted-aloha-simulator/}}). 
\subsection{Performance}

We define goodput and occupancy as the mean values of $R(t)$ and $O(t)$ over a time-scale, where $R(t)$ (resp. $O(t)$) denotes a \emph{successful} transmission (resp. a  transmission) occurring at time $t$:
\begin{align}
	R(t) &= \mathds{1} \Big( \sum_{i \in [N]} Z_i(t) = 1 \Big) \\ 
O(t) &=\max_{i\in [N]}Z_i(t).
\end{align}

\begin{figure}[t!]
	\centering
	\vspace{-.2cm}
	\subfloat[$N=4$, $p_0=2/17$\label{fig:b4_125}]{
 \includegraphics[width=.5\textwidth]{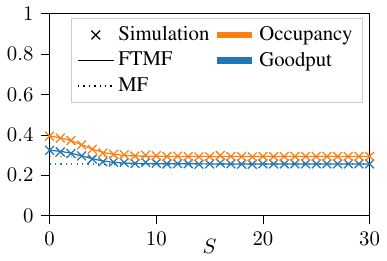}
 }
	\subfloat[$N=4$, $p_0=2/3$\label{fig:b4_500}]{
 \includegraphics[width=.5\textwidth]{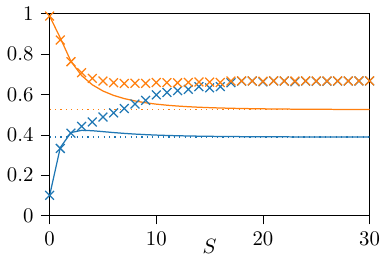}
 }\\
	\vspace{-.2cm}
	\subfloat[$N=1024$, $p_0=2/17$\label{fig:b1024_125}]{
 \includegraphics[width=.5\textwidth]{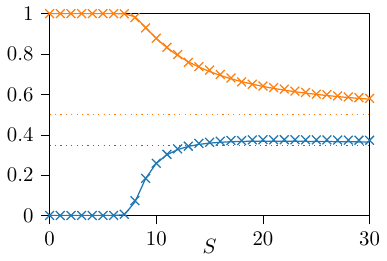}
 }
\subfloat[$N=1024$, $p_0=2/3$\label{fig:b1024_500}]{
\includegraphics[width=.5\textwidth]{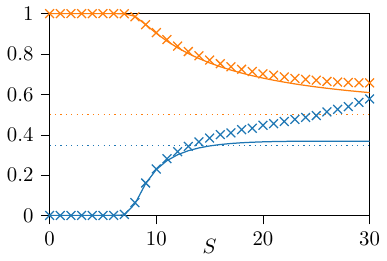}
}
	\caption{Occupancy and goodput on different time-scales $S$ ($\alpha=1/2$).\label{fig:basic}}
	\vspace{-.3cm}
\end{figure}

Figure~\ref{fig:basic} shows the evolution of goodput and occupancy for the considered parameters. In addition to the simulation results, we also display the values predicted by MF and FTMF, which are derived from the state distribution.

While FTMF is accurate for $p_0=2/17$ (Figures~\ref{fig:b4_125} and \ref{fig:b1024_125}), it is not for $p_0=2/3$ (Figures~\ref{fig:b4_500} and \ref{fig:b1024_500}), confirming that mean field  approximations do not work if the system is meta-stable. In particular, the goodput is above the predicted values. In Figure~\ref{fig:b4_500}, goodput matches occupancy at large scales: almost all transmissions are successful. The explanation, developed later, is that only one single source is able to emit at a given time, the others been forced into idleness.

For $N=4$, the metrics seem to converge in the scales considered, but this is not the case for $N=1024$, where occupancy still seems to evolve at $S=30$ (over half a day). This means that for large $N$, we may not be able to observe a steady state at reasonable scales, even if it exists.

\subsection{State Distribution}

\begin{figure}[!t]
	\centering\vspace{-.2cm}
	\subfloat[$N=4$, $p_0=2/17$\label{fig:d4_125}]{
 \includegraphics[width=.5\textwidth]{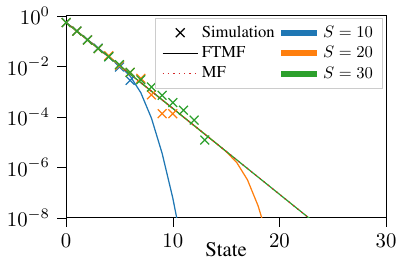}
 }
	\subfloat[$N=4$, $p_0=2/3$\label{fig:d4_500}]{
 \includegraphics[width=.5\textwidth]{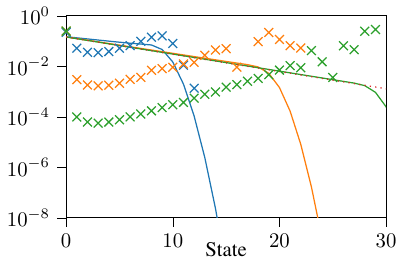}
 }\\
	\vspace{-.2cm}
	\subfloat[$N=1024$, $p_0=2/17$\label{fig:d1024_125}]{
 \includegraphics[width=.5\textwidth]{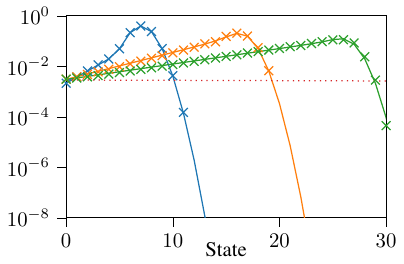}
 }
	\subfloat[$N=1024$, $p_0=2/3$\label{fig:d1024_500}]{
 \includegraphics[width=.5\textwidth]{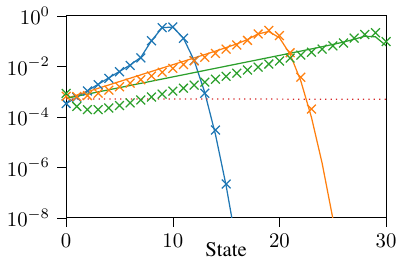}
 }\\
	\caption{Proportion of sources in each state on different time-scales ($\alpha=1/2$).\label{fig:distribution}}
\vspace{-.2cm}
\end{figure}

To have a better understanding of the results above, Figure~\ref{fig:distribution} shows the state distribution averaged over all sources, measured at time-scales $10$, $20$, and $30$.

For $N=4, p_0=2/17$ (Figure~\ref{fig:d4_125}), simulation fits the asymptotic geometric distribution predicted by MF (cf Section~\ref{sec:mean_field_geo}). In other settings, however, the simulations exhibit a ``positive slope'' incompatible with a stationary geometric distribution.

For $N=4, p_0=2/3$ (Figure~\ref{fig:d4_500}), state~$0$ has roughly probability 1/4, while the rest of the mass is concentrated in values that become larger and larger with $S$. This explains the performance observed in Figure~\ref{fig:b4_500}: one source is in state~$0$ and can emit at maximum speed while the others are idle due to their high states.

In all simulations, one can observe that the distribution collapses for large values. This is expected: the average time for a source $i$ to go from $x_i$ to $x_i+1$ is at least $2^{x_i}/p_0$. In particular, at scale $S$, it is unlikely to observe any state $x_i>S$: there is not enough time for these states to be reached.

From this observation, we draw the conclusion that at time-scale $S$, it is not possible to separate a meta-stable system from a stable system that has a non-negligible probability of having sources with a state greater than $S$. In such cases, only the transitory behavior of the system can be observed, and the stability condition becomes somehow irrelevant.

Consider for example the case $N=1024, p_0=2/17$ (Figure \ref{fig:d1024_125}): the simulations are not geometric, but FTMF fits simulations perfectly, which validates the mean-field model. We conjecture that the system is stable, but that its convergence can only be observed at impractical scales ($S=100$ or more, i.e. far greater than the age of the universe): only the transient behavior can be observed, where some sources get larger and larger states, like for a meta-stable system.

\subsection{Implicit Admission Control}

\begin{figure}[!t]
	\centering
	\subfloat[$p_0=2/17$\label{fig:l4}]{
 \includegraphics[width=.5\textwidth]{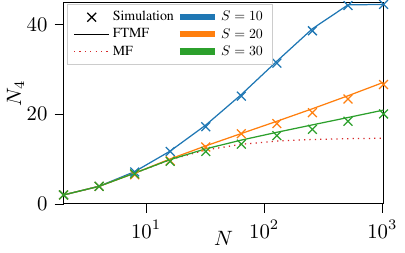}
 }
	\subfloat[$p_0=2/3$ (unstable)\label{fig:l1024}]{
 \includegraphics[width=.5\textwidth]{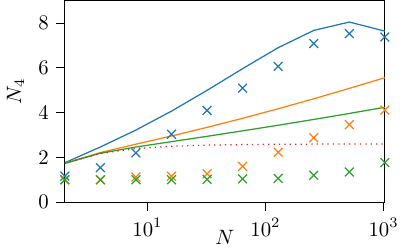}
 }
	\caption{Average number of sources $N_4$ with state $4$ or less, as a function of $N$ and time-scale $S$. \label{fig:live}}
	\vspace{-.3cm}
\end{figure}

Figure~\ref{fig:basic} shows that, except for small time-scales, the total goodput increases with the number of sources. However, if all sources are treated equally, the goodput of single source must go to $0$ as $N$ goes to infinity (the channel capacity is finite). Figures~\ref{fig:b4_500} and \ref{fig:d4_500} gave a first example where sources are not treated equally. This phenomenon, which we call \emph{implicit admission control}, should allow a slowly-evolving set of sources to have good access to the channel.

To study this, Figure~\ref{fig:live} shows, for different values of $S$ and $N$, the average number of sources $N_4$ with state equal to $4$ or less. $N_4$ gives an indication of the number of active sources with reasonable odds to transmit\footnote{The value $4$ is an arbitrary choice to separate active and idle sources. Other values yield qualitatively similar results.}. We can observe that $N_4$ grows with $N$, which demonstrates implicit admission control: even for a large value of $N$, a few sources stay active enough to achieve a throughput that does not go to $0$.

For $p_0=2/17$, $N_4$ grows up to $20$ for $N=1024, S=30$. If we assume that the mean-field approximation is accurate, which seems to be the case, $N_4$ should stay above $14$ even for large scales and values of $N$.

On the other hand, for $p_0=2/3$, we observe that $N_4$ rapidly goes to $1$ for large scales and small to medium values of $N$. The mean field approximation overestimates $N_4$, suggesting that the value is not likely to grow past $2$ for large scales and values of $N$. In other words, $p_0=2/3$ may be too aggressive with respect to admission control, as apart at small time-scales, it allows only one or two sources to be simultaneously active.

\subsection{Bounded States}

\begin{figure}[!t]
	\centering
	\subfloat[$N=4$, $p_0=2/17$\label{fig:state_4_0}]{
 \includegraphics[width=.5\textwidth]{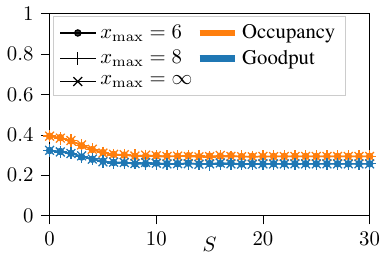}
 }
	\subfloat[$N=4$, $p_0=2/3$\label{fig:state_4_1}]{
 \includegraphics[width=.5\textwidth]{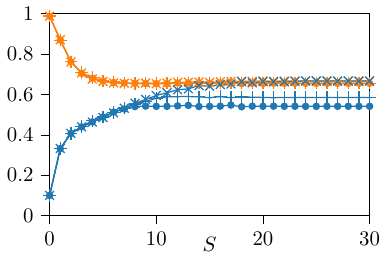}
 
 }\\
	\vspace{-.2cm}
	\subfloat[$N=1024$, $p_0=2/17$\label{fig:state_1024_0}]{
  \includegraphics[width=.5\textwidth]{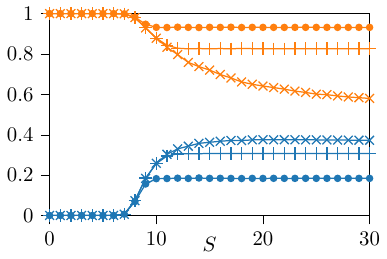}
 }
\subfloat[$N=1024$, $p_0=2/3$\label{fig:state_1024_1}]{
  \includegraphics[width=.5\textwidth]{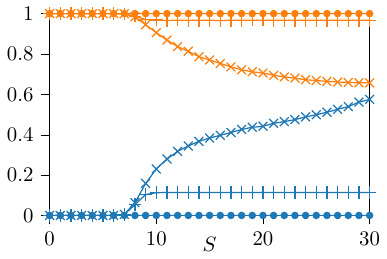}
}
	\caption{Impact of the number of states allowed ($\alpha=1/2$).\label{fig:states}}
	\vspace{-.3cm}
\end{figure}

Implicit admission control does not exist in traditional models, which limit the number of states. Indeed, bounding the number of states means that all sources have a minimal probability to try to transmit, so when $N$ grows, the probability of collision goes to $1$ and the total goodput goes to $0$.

In Figure~\ref{fig:states}, simulations show what happens if we limit the values of $x$ to $x_{\max}$ (if a source in state $x_{\max}$ is colliding, it stays in state $x_{\max}$). In addition to the unlimited model, we display the performance for $x_{\max}=6$ (for $p_0=2/17$, i.e. ${\rm CWmin=16}$ in the classic model, this translates to ${\rm CWmax}=1024$) and $x_{\max}=8$ (${\rm CWmax}=4096$).

When the system is very stable like in Figure~\ref{fig:state_4_0}, the limitation has no impact because it is unlikely to encounter many successive collisions. In other settings, we observe that the limitation reduces the goodput and increases the occupancy. In particular, for large $N$, when $p_0$ increases, the goodput increases when there is not bound but decreases otherwise. This type of interaction between the number of states and the goodput was already observed in~\cite{kumar2005new}.

Lastly, Figures~\ref{fig:state_1024_0} and \ref{fig:state_1024_1} suggest that, even if real systems may prefer to have a bounded $x$ (e.g. easier implementation, prevent sources from being blocked for too long), increasing even slightly the bound can have a large, positive impact on the goodput of the system when $N$ is large.

\subsection{Model Validation}
\label{sec:validation}

Our model relies on geometric distributions to decide emissions but standard protocols use uniform distributions over finite windows. 
In Figure~\ref{fig:shape}, we compare performance between the two. We can see that both models yield very similar results except for small $N$ and small time-scales. This difference can actually be explained: with parameters $p_0=2/17$ and  ${\rm CWmin}=16$), the expected delay between two transmissions at state $0$ is $17/2$ for both models. However, the probability to transmit at time $0$ differs: while it is $2/17$ in the geometric model, it is $1/16$ in the uniform one. As a consequence, for $N=4$, occupancy at time $0$ is $1-(15/17)^4\approx 0.39$ in the geometric model, but only $1-(15/16)^4\approx 0.23$ in the uniform one (goodput values can be computed the same way). As soon as the first uniform contention window is exhausted, what matters is the delay between two transmissions and the performance difference vanishes.
Experiments with other parameters (various $N$, $p_0=2/({\rm CWmin}+1)$, bounded states, etc.) show the same trend and validate our geometric approach.

\begin{figure}[!t]
	\centering\vspace{-.2cm}
	\subfloat[$N=4$ \label{fig:shape_4}]{
 \includegraphics[width=.5\textwidth]{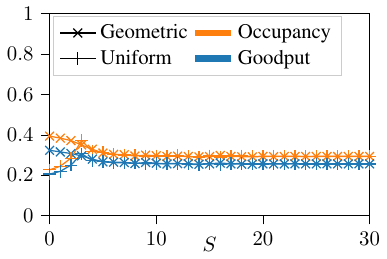}
 }
	\subfloat[$N=1024$ \label{fig:shape_1024}]{
 \includegraphics[width=.5\textwidth]{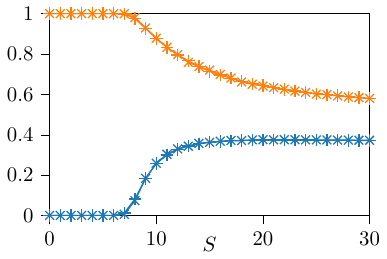}
 }
	\vspace{-.2cm}
	\caption{Impact of the shape of the contention window ($p_0=2/17$ and $\alpha=1/2$ for  geometric, ${\rm CWmin}=16$ for uniform). \label{fig:shape}}
	\vspace{-.2cm}
\end{figure}

\section{Conclusion}\label{sec:conclu}

We have shown that exponential backoff, in the realistic case where the protocol parameters are not dependent on the number of sources,  can have both stable and meta-stable behavior, and that meta-stability can be desirable, by creating a form of implicit admission control. Since in the meta-stable regime previously known analyses based on mean field approximation make incorrect predictions, we have provided both theoretical tools and extensive numerical experiments to understand this new phenomenon.

\bibliographystyle{plain} 
\bibliography{biblio}
\end{document}